\documentclass[review,onefignum,onetabnum]{siamonline250211}



\usepackage{lipsum}
\usepackage{amsfonts}
\usepackage{graphicx}
\usepackage{epstopdf}
\usepackage{algorithmic}
\ifpdf
  \DeclareGraphicsExtensions{.eps,.pdf,.png,.jpg}
\else
  \DeclareGraphicsExtensions{.eps}
\fi

\usepackage{enumitem}
\setlist[enumerate]{leftmargin=.5in}
\setlist[itemize]{leftmargin=.5in}


\newsiamremark{remark}{Remark}
\newsiamremark{hypothesis}{Hypothesis}
\crefname{hypothesis}{Hypothesis}{Hypotheses}
\newsiamthm{claim}{Claim}
\newsiamremark{fact}{Fact}
\crefname{fact}{Fact}{Facts}

\headers{Credible Intervals for Probability of Failure with Gaussian Processes}{A. Sorokin, V. Rao}

\title{Credible Intervals for Probability of Failure with Gaussian Processes\thanks{Submitted to the editors March 13, 2026.
\funding{AS is partially supported by DARPA The Right Space HR0011-25-9-0031.}}}

\author{
  Aleksei Sorokin \thanks{Department of Statistics, University of Chicago, Chicago, IL 
  (\email{sorokin@uchicago.edu}).}
  \and 
  Vishwas Rao \thanks{Mathematics and Computer Science Division, Argonne National Laboratory, Lemont, IL 
  (\email{vhebbur@anl.gov}).}
}

\usepackage{amsopn}


\newcommand{\D}{\mathrm{d}}

\newcommand{\ba}{\boldsymbol{a}}
\newcommand{\bb}{\boldsymbol{b}}

\newcommand{\bD}{\boldsymbol{D}}

\newcommand{\bk}{\boldsymbol{k}}

\newcommand{\boldm}{\boldsymbol{m}}

\newcommand{\bu}{\boldsymbol{u}}
\newcommand{\bU}{\boldsymbol{U}}
\newcommand{\bv}{\boldsymbol{v}}
\newcommand{\bV}{\boldsymbol{V}}

\newcommand{\bx}{\boldsymbol{x}}
\newcommand{\bX}{\boldsymbol{X}}
\newcommand{\by}{\boldsymbol{y}}

\newcommand{\bsigma}{\boldsymbol{\sigma}}
\newcommand{\bSigma}{\boldsymbol{\Sigma}}

\newcommand{\bbeta}{\boldsymbol{\beta}}

\newcommand{\bDelta}{\boldsymbol{\Delta}}
\newcommand{\bkappa}{\boldsymbol{\kappa}}

\newcommand{\mV}{\mathsf{V}}
\newcommand{\mL}{\mathsf{L}}
\newcommand{\mK}{\mathsf{K}}
\newcommand{\mX}{\mathsf{X}}
\newcommand{\mI}{\mathsf{I}}
\newcommand{\mSigma}{\mathsf{\Sigma}}
\newcommand{\mA}{\mathsf{A}}
\newcommand{\mU}{\mathsf{U}}

\newcommand{\calB}{\mathcal{B}}

\newcommand{\calF}{\mathcal{F}}

\newcommand{\calN}{\mathcal{N}}
\newcommand{\calO}{\mathcal{O}}

\newcommand{\calT}{\mathcal{T}}
\newcommand{\calU}{\mathcal{U}}

\newcommand{\bbE}{\mathbb{E}}
\newcommand{\bbG}{\mathbb{G}}

\newcommand{\bbN}{\mathbb{N}}

\newcommand{\bbR}{\mathbb{R}}

\newcommand{\bbU}{\mathbb{U}}

\newcommand{\simiid}{\overset{\mathrm{IID}}{\sim}}

\newcommand{\Var}{\mathrm{Var}}
\newcommand{\Cov}{\mathrm{Cov}}

\newcommand{\Diag}{\mathrm{Diag}}

\DeclareMathOperator*{\argmin}{argmin}

\newcommand{\TP}{\mathrm{TP}}
\newcommand{\FP}{\mathrm{FP}}
\newcommand{\TN}{\mathrm{TN}}
\newcommand{\FN}{\mathrm{FN}}

\newcommand{\CMC}{\mathrm{CMC}}
\newcommand{\QMC}{\mathrm{QMC}}
\newcommand{\ISMC}{\mathrm{ISMC}}

\newcommand{\ACC}{\mathrm{ACC}}
\newcommand{\ERR}{\mathrm{ERR}}


\ifpdf
\hypersetup{
  pdftitle={Credible Intervals for Probability of Failure with Gaussian Processes},
  pdfauthor={A. Sorokin, V. Rao}
}
\fi


\externaldocument[][nocite]{supplement}


\nolinenumbers 

\begin{document}

\maketitle

\begin{abstract}
  Estimating the probability of failure for expensive simulations is a central task in reliability analysis for structural design, power grid design, and safety certification, among other areas. This work derives credible intervals on the probability of failure by modeling the simulation as a realization of a Gaussian process surrogate. These intervals are governed by the pointwise binary classification error of the surrogate and are compatible with the broad class of adaptive sampling schemes proposed in the literature. We further propose a novel batch sampling scheme that suggests multiple evaluation points per iteration, enabling parallel simulation on HPC systems. The method is empirically validated using our scalable, open-source implementation on a variety of test problems including a Tsunami model where failure is quantified in terms of maximum wave height.
\end{abstract}

\begin{keywords}
  probability of failure, 
  reliability analysis,
  Gaussian processes, 
  Monte Carlo, 
  probabilistic numerics
\end{keywords}

\begin{MSCcodes}
  65C05, 60G15, 62F15
\end{MSCcodes}

\section{Introduction}
Computing the probability of extreme events is of central importance for complex systems that arise in natural phenomena such as climate, weather, and oceanography \cite{Easterling_2000, Easterling_2000A} as well as in engineering systems such as structures \cite{Cornell_1968, Vrouwenvelder_2000} and power grids \cite{Lesieutre_2008}. Examples of consequential extreme events are rogue waves in the ocean \cite{Dysthe_2008}, hurricanes, tornadoes \cite{Ross_2003}, and power outages \cite{Atputharajah_2009}. Hence, evaluating rare event probabilities is an important task in many scientific fields. Evaluating or simulating the system is often expensive, which necessitates sample efficient estimation schemes. Many methods exist to efficiently find good point estimates. 

Quantifying extreme excursion probabilities is of great importance  because of their  socioeconomic impact. Outcomes of interest reside in the tails of the probability distribution of the associated event space because of their low likelihood. To resolve the tails of these events, one has to evaluate multivariable integrals over complex domains. Because of the tiny mass and complex shape of the relevant likelihood level sets, standard quadrature, cubature, or sparse grid  methods cannot be  applied directly to evaluate these integrals. The most commonly used method is Monte Carlo simulation (MCS), which requires repeated samples of the underlying outcome. For such small probabilities, however, MCS exhibits a large variance relative to the probability to be computed, and thus it needs a large number of samples to produce results of acceptable accuracy. For example, estimating the odds of an extreme event, whose probability ends up being $10^{-3}$ for an underlying process that requires 10 minutes per numerical simulation, requires two years of serial computation for producing an estimate with a standard deviation of less than 10\% of the target value via MCS. Hence, alternative methods must be developed that are computationally efficient. 


The rest of this section reviews related work and outlines our novel contributions. \Cref{sec:Monte_Carlo_Methods} overviews Monte Carlo methods that underpin many of the techniques used for probability of failure estimation. \Cref{sec:Gaussian_Processes} introduces the Gaussian process surrogate model and its connection to pointwise failure classification. Our main contributions start in \Cref{sec:estimators_error_bounds} where we derive credible intervals for common probability of failure estimates. \Cref{sec:adaptive_sampling_algorithm_cost} discusses iterative schemes amenable to the derived credible intervals and proposes a novel batch sampling scheme for simulations on HPC platforms. \Cref{sec:numerical_experiments} empirically evaluates our method on numerous benchmark problems including a Tsunami simulation where probability of failure is quantified in terms of wave height. Finally, \Cref{sec:conclusions_future_work} provides a brief summary alongside proposed directions for future work.

Importance Sampling Monte Carlo (ISMC) is one popular technique for estimating probability of failure; see \cite{Bucklew_2013B,rubinstein2016simulation} for comprehensive treatments of rare event simulation and importance sampling. Basic Monte Carlo techniques often require a large number of samples from the original density which makes them inapplicable to expensive simulations. ISMC can greatly reduce the number of samples required for a good estimate by choosing a sampling density close to the original density truncated to the failure region. The challenge of ISMC schemes is to find a performant sampling density. Popular schemes derive densities from surrogate models (often Gaussian processes) \cite{camera,dalbey2014gaussian,echard2013combined,dubourg2013metamodel}, kernel density estimators \cite{ang1992optimal,swiler2010importance}, importance direction vectors \cite{cheng2023rare}, mixture of Gaussians \cite{wahal2019bimc}, or subset simulation \cite{Au_2007,uribe2021cross} among others. The many techniques of error estimation for Monte Carlo methods can be immediately applied to ISMC, see \cite{qmc_software} for an overview. However, these schemes typically require two separate sets of expensive simulations: one to construct the importance sampling density and another to compute the estimate itself, since reusing the same evaluations for both would introduce bias. 

Another popular approach derives the probability of failure estimate directly from a surrogate model for the simulation. Here simulations are only used to build the surrogate model and the estimator is fully derived from evaluations of the surrogate. Gaussian processes are the most popular choice of surrogate model due to their built-in uncertainty quantification at every point in the input space. A key challenge in this approach is choosing where to evaluate the expensive simulation in order to most efficiently refine the surrogate and the resulting failure probability estimate. This is typically framed as a sequential experimental design problem in which an acquisition function determines the next evaluation location.

A number of acquisition functions have been proposed in this setting. The Expected Feasibility Function (EFF) \cite{bichon2008efficient} targets regions where the Gaussian process posterior mean is close to the failure threshold relative to the posterior standard deviation. Vazquez and Bect \cite{vazquez2009sequential} formulated a sequential Bayesian algorithm specifically for estimating the probability of failure. Stepwise Uncertainty Reduction (SUR) strategies \cite{bect2012sequential} greedily select samples to maximally reduce the posterior uncertainty in the failure probability estimate. Lv et al.\ \cite{lv2015new} proposed an alternative learning function based on the expected risk function for kriging-based reliability analysis. Other contributions include studying the effect of additional samples on uncertainty reduction \cite{bae.pf_gp_uncertainty_reduction} and stochastic spectral embedding for rare event estimation \cite{wagner2022rare}. Polynomial-chaos kriging has also been applied to this setting \cite{schobi2017rare}. See also \cite{camera} for a cost-aware, multi-fidelity extension.

While these methods leverage the Gaussian process to adaptively place samples, they all produce point estimates of the failure probability without rigorous uncertainty quantification on the estimate itself. That is, although the Gaussian process captures pointwise uncertainty about the simulation, none of these methods translate this uncertainty into a credible interval on the resulting probability of failure. To the best of our knowledge, this paper is the first to derive such credible intervals.

Throughout this paper we use the following notations. The set of all natural numbers is denoted by $\bbN$. The transpose of a matrix $\mA$ is $\mA^\top$. For a function $f$, we let $\mathrm{support}(f)$ be the set of all points $\bx$ in the domain for which $f(\bx) \neq 0$. The indicator function $1_F(\bu)$ takes value $1$ if $\bu \in F$ and $0$ otherwise. Bold symbols are used for vectors and $u_j$ is the $j$th element of $\bu$. Inequalities between vectors are taken elementwise, e.g., $\ba \leq \bb$ if and only if $a_j \leq b_j$ for all $j$. Similarly, functions inequalities are taken elementwise, e.g., $f \geq 0$ means $f(\bx) \geq 0$ for all $\bx$ in the domain of $f$. In probability, $\bbE_\bbG$ is the expectation with respect to probability measure $\bbG$. Sometimes we will just write $\bbE$ when the probability measure is understood. $\calU(F)$ is the uniform distribution over the set $F$ while $\calU[0,1]^d$ is the uniform distribution over the $d$ dimensional unit cube. $\calN(\ba,\mSigma)$ is the multivariate Gaussian distribution determined by mean $\ba$ and covariance matrix $\mSigma$ with the dimension implied by the size of the arguments. $\mathrm{Geom}(c)$ is the Geometric distribution with success probability $c$. $\bX_1,\dots,\bX_N \sim Q$ indicates that $\bX_1,\dots,\bX_N$ each have distribution $Q$ and $\bX_1,\dots,\bX_N \simiid Q$ indicates that $\bX_1,\dots,\bX_N$ are independent and identically distributed (IID).

The novel contributions of this work are as follows.
\begin{itemize}
    \item Provide credible intervals that hold with guaranteed confidence for schemes which derive probability of failure estimates from a probabilistic surrogate model.
    \item Provide a novel sampling scheme for iteratively updating a Gaussian process surrogate to approximate probability of failure. This scheme is suitable to high performance computing (HPC) settings where the expensive simulation may be evaluated at multiple parameter configurations in parallel.  
    \item Provide efficient algorithms to update Gaussian process posteriors and their resulting estimates and credible intervals. 
    \item Provide a scalable, open-source implementation. 
\end{itemize}

\section{Monte Carlo Methods} \label{sec:Monte_Carlo_Methods}

Monte Carlo methods are a powerful class of techniques for high dimensional numerical integration or equivalently expectation approximation. Specifically, Monte Carlo methods may efficiently estimate the true mean 
\begin{equation}
    \mu := \bbE\left[f(\bU)\right] = \int_{[0,1]^d} f(\bu) \D \bu
    \label{eq:mc_true_mean}
\end{equation}
where $f: [0,1]^d \to \bbR$ is a measurable integrand and $\bU \sim \calU[0,1]^d$. This setting applies to non-uniform uncertainty since $f$ may incorporate a variable transformation, see \cite{qmc_software} for a discussion and framework. In the probability of failure setting $f = 1_{F}$, meaning $f$ is $1$ on some failure region $F \subseteq [0,1]^d$ and $0$ on the success region $S = [0,1]^d \backslash F$. 

Monte Carlo methods approximate the true mean $\mu$ by the average of function evaluations at specially chosen sampling nodes $\mU_N := \{\bU_i\}_{i=0}^{N-1} \subset [0,1]^d$ to attain the estimator
\begin{equation}
    \widehat{\mu} := \frac{1}{N} \sum_{i=0}^{N-1} f(\bU_i).
    \label{eq:mc_estimate}
\end{equation}
The following subsections discuss the various choices of sampling nodes that flavor different Monte Carlo techniques. More in depth discussions of Monte Carlo methods and the specialized techniques discussed in the following subsections may be found in \cite{niederreiter1992random,dick2010digital,l2014random,l2009monte,hickernell1998generalized,mcbook,qmc_software}.

\subsection{Crude Monte Carlo}

Crude Monte Carlo (CMC) methods choose IID sample nodes $\bU_0,\dots,\bU_{N-1} \simiid \calU[0,1]^d$. We denote \eqref{eq:mc_estimate} subject to this choice by $\widehat{\mu}^\CMC$. This estimator is unbiased for $\mu$ and has variance $\Var[f(\bU)]/N$; in the probability of failure setting $\Var[\widehat{\mu}^\CMC] = \mu(1-\mu)/N$. When $\Var[f(\bU)] < \infty$, the approximation error $\lvert \mu - \widehat{\mu}^\CMC \rvert$ converges to $0$ at the rate $\calO(N^{-1/2})$. For example, consider a probability of failure $\mu = 10^{-3}$ for a simulation requiring $10$ minutes per evaluation. Achieving a standard deviation of $10^{-4}$ (i.e., $10\%$ relative error) via CMC requires $N = \mu(1-\mu)/10^{-8} \approx 10^{5}$ samples, corresponding to roughly two years of serial computation. This renders CMC impractical for rare event probability estimation with expensive forward models.

\subsection{Quasi-Monte Carlo} \label{sec:QMC}

Quasi-Monte Carlo (QMC) methods carefully coordinate dependent sample nodes $\mU_N = \{\bU_i\}_{i=0}^{N-1}$ so their discrete distribution $\frac{1}{N} \sum_{i=0}^{N-1} 1_{\{\bv \in [0,1]^d: \bv \leq \bU_i\}}(\bu)$ is ``closer to'' the true distribution $\prod_{j=1}^d u_j$ across all $\bu \in [0,1]^d$. The distance between the discrete and true distributions is quantified by a discrepancy measure. The Koksma--Hlawka inequality \cite{hlawka1961funktionen} says that for any $f$ with bounded variation $V_\mathrm{HK}(f)$ in the sense of Hardy and Krause, we have 
$$\left\lvert \widehat{\mu}^\QMC - \mu \right\rvert \leq V_\mathrm{HK}(f) D^*(\mU_N)$$
where 
$$D^*(\mU_N) = \sup_{\bu \in [0,1]^d} \left\lvert \prod_{j=1}^d u_j - \frac{1}{N} \sum_{i=0}^{N-1} 1_{\{\bv \in [0,1]^d: \bv \leq \bU_i\}}(\bu) \right\rvert$$
is the star-discrepancy and $\widehat{\mu}^\QMC$ is \eqref{eq:mc_estimate} under the current choice of QMC nodes. Other discrepancy, variation pairings are also available, see \cite{dick2010digital,hickernell1998generalized} for an overview. Extensible constructions exist to attain 
\begin{equation}
    D^*(\mU_N) = \calO((\log N)^d / N),
    \label{eq:Dstar_rate}
\end{equation}
so for an integrand $f$ with finite variation $V_\mathrm{HK}(f)$ we achieve $\lvert \mu - \widehat{\mu}^\QMC \rvert = \calO(N^{-1+\delta})$ for any $\delta > 0$. This rate is significantly faster than that of Crude Monte Carlo for this more restrictive class of nicely behaved integrands. 

This work utilizes sampling nodes from an extensible, randomized low discrepancy (LD) sequence which attains the desired discrepancy rate in \eqref{eq:Dstar_rate}. The method of randomization ensures $\bU_0,\dots,\bU_{N-1} \sim \calU[0,1]^d$ so $\widehat{\mu}^\QMC$ remains unbiased for $\mu$. Randomization deceases the likelihood of the points badly matching the integrand and ensures $\mU_N \subset (0,1)^d$ with probability $1$. These LD point sets are \emph{not} significantly slower to generate than IID points. Popular LD sequences include the Sobol' (a special case of a digital net), lattice, and Halton constructions. LD sequences and some of their randomization techniques are given a more thorough treatment in \cite{niederreiter1992random,dick2010digital,l2014random,sorokin.2025.ld_randomizations_ho_nets_fast_kernel_mats}.

\subsection{Importance Sampling Monte Carlo}

Importance Sampling Monte Carlo (ISMC) replaces $\mU_N$ in \eqref{eq:mc_estimate} with sampling nodes $\bX_0,\dots,\bX_{N-1} \simiid Q$ and adjusts the weights in the sum to remain unbiased. Here $Q$ is some auxiliary distribution with density $q$ where we assume $\mathrm{support}(f) \subseteq \mathrm{support}(q)$. We may rewrite the true mean in \eqref{eq:mc_true_mean} as 
$$\mu = \int_{[0,1]^d} f(\bu) \D \bu =  \int_{[0,1]^d} \frac{f(\bx)}{q(\bx)} q(\bx) \D \bx = \bbE\left[\frac{f(\bX)}{q(\bX)}\right]$$
where $\bX \sim Q$. Plugging in nodes $\mX:=\{\bX_i\}_{i=0}^{N-1}$ and adjusting evaluation weights produces the unbiased estimator 
$$\widehat{\mu}^\ISMC := \frac{1}{N} \sum_{i=0}^{N-1} \frac{f(\bX_i)}{q(\bX_i)}$$
with
$$\Var\left[\widehat{\mu}^\ISMC\right] = \frac{1}{N} \Var\left[\frac{f(\bX)}{q(\bX)}\right] = \frac{1}{N}\bbE\left[\left(\frac{f(\bX)}{q(\bX)}-\mu\right)^2\right] = \frac{1}{N} \int_{[0,1]^d} \frac{\left[f(\bx)-\mu q(\bx)\right]^2}{q(\bx)}\D \bx.$$
When $\Var[f(\bX)/q(\bX)]$ is less than $\Var[f(\bU)]$, ISMC is expected to outperform CMC for fixed $N$. However, a bad choice of $Q$ may result in a large variance and render ISMC far less efficient than CMC. 

In the probability of failure setting with $f=1_F \geq 0$, $\Var[\widehat{\mu}^\ISMC]=0$ when $q = 1_{F} / \mu$. Since $F$ and $\mu$ are generally unknown, the optimal importance sampling distribution $\calU(F)$ is intractable but may still provide guidance to methods which utilize ISMC for probability of failure estimation. 

As discussed in the introduction, many methods in the probability of failure literature deploy ISMC as a means to reduce the number of expensive simulations required for a good approximation \cite{dubourg2013metamodel,uribe2021cross,camera,ang1992optimal,cheng2023rare,dalbey2014gaussian,echard2013combined,swiler2010importance}. Often, such methods will use an initial set of simulation evaluations to propose an importance sampling density. A separate set of simulations must then be run to obtain the ISMC estimator. Such methods may be made adaptive by constructing separate ISMC estimates at after iteratively increasing the sampling size. However, the optimal linear combination of such ISMC estimates simply selects the lowest variance estimate, thus disregarding both evaluations used to fit the probabilistic model and evaluations from other ISMC estimates throughout the adaptive routine. 
 
\subsection{Rejection Sampling}\label{sec:rs}

Rejection sampling is a technique for drawing IID samples from $Q$ whose density $q$ is only known up to a constant multiple. \Cref{alg:RS} details a rejection sampling procedure for drawing IID nodes from an unnormalized density $\varrho \leq 1$ for $Q$ \cite{practicalqmc}.
\begin{algorithm}[htbp]
    \caption{$\boldsymbol{\mathrm{AlgRS}}(\varrho,b)$: Rejection Sampling} \label{alg:RS}
    \begin{algorithmic}[1]
        \REQUIRE{$\varrho:[0,1]^d \to [0,1]$ \COMMENT{unnormalized density to sample from}}
        \REQUIRE{$b \in \bbN$ \COMMENT{the number of samples to draw}}
        \STATE{$\mX_0 \gets \emptyset$ \COMMENT{initialize an empty point set of IID draws from $\varrho$}}
        \STATE{$i \gets 0$ \COMMENT{initialize accepted counter}}
        \STATE{$t \gets 0$ \COMMENT{initialize tried counter}}
        \WHILE{$i<b$}
            \STATE{draw $\bX_t \sim \calU[0,1]^d$ independent of $\{\bX_j: j<t\}, \{\bU_j: j<t\}$ \COMMENT{draw candidate}}
            \STATE{draw $\bU_t \sim \calU(0,1)$ independent of $\{\bX_j: j \leq t\}, \{\bU_j: j < t\}$ \COMMENT{draw threshold}}
            \IF{$\bU_t \leq \varrho(\bX_t)$}
              \STATE{$i \gets i+1$ \COMMENT{increment accepted counter}}
              \STATE{$\mX_i \gets \mX_{i-1} \cup \{\bX_t\}$ \COMMENT{add candidate to the accepted set}}
            \ENDIF
            \STATE{$t \gets t+1$ \COMMENT{increment tried counter}}
        \ENDWHILE
        \RETURN{$\mX_{b}$ \COMMENT{return the $b$ IID draws from $\varrho$}}
    \end{algorithmic}
\end{algorithm}

The random number of tries $T$ required to draw a single sample from $Q$ with such rejection sampling follows $T \sim \mathrm{Geom}(c)$ where $c = \bbE\left[\varrho(\bU)\right]$ is the rejection sampling efficiency with $\bU \sim \calU[0,1]^d$. Then one expects that drawing $b$ IID points from $\varrho$ will require $b/c$ evaluations of $\varrho$.

\section{Gaussian Process Surrogate Model} \label{sec:Gaussian_Processes}

\subsection{Gaussian Process Regression} \label{sec:gp_regression}
\begin{figure}[htbp]
    \centering
    \includegraphics[width=\textwidth]{./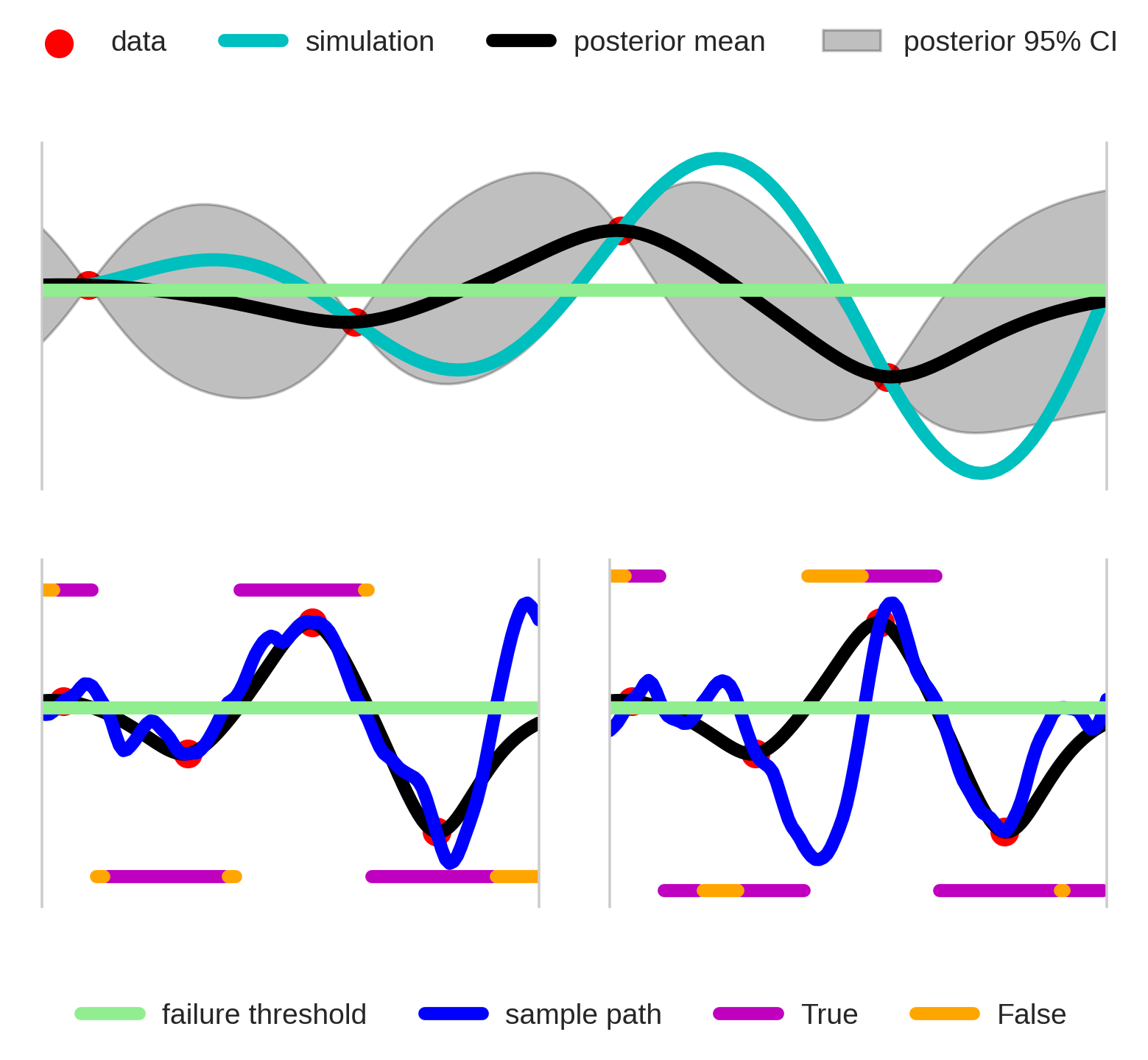}
    \caption{The top figure shows the true simulation and a posterior Gaussian process fit to a few data points. The bottom two plots show some Gaussian process sample paths $g_1,g_2 \in \Omega$ with their corresponding $\TP$, $\FP$, $\TN$, and $\FN$ regions. The predicted failure $\TP$ and $\FP$ regions are shown along the top of each plot while the predicted success $\TN$ and $\FN$ regions are shown along the bottom of each plot.}
    \label{fig:TP_FP_TN_FN}
\end{figure}

Gaussian process regression, or kriging, is a Bayesian technique for modeling both the knowledge of a function at given locations and the uncertainty in a function between those locations. Gaussian processes in machine learning are given a thorough treatment in \cite{rasmussen.gp_ml}, from which the majority of this section is derived. 

Let $g:[0,1]^d \to \bbR$ be a canonical real valued stochastic Gaussian Process on probability space $(\Omega,\calF,\bbG_0)$ indexed by elements in $[0,1]^d$. The Gaussian process regression model is specified by a \emph{prior mean} $m_0: [0,1]^d \to \bbR$ and positive definite \emph{prior covariance kernel} $k_0:[0,1]^d \times [0,1]^d \to \bbR$ so that for any deterministic $\bu,\bu_1,\bu_2 \in [0,1]^d$ we have
\begin{align*}
    m_0(\bu) &:= \bbE_{\bbG_0}\left[g(\bu) \right], \\
    k_0(\bu_1,\bu_2) &:= \Cov_{\bbG_0}\left[g(\bu_1), g(\bu_2) \right], \quad\mathrm{and} \\
    \sigma^2_0(\bu) &:= \Var_{\bbG_0}\left[g(\bu) \right] = k_0(\bu,\bu).
\end{align*}

Now assume we observed $g$ at $\mX := \mX_n = \{\bX_i\}_{i=0}^{n-1} \in [0,1]^{n \times d}$ to get $\by := \by_n  = \{y_i\}_{i=0}^{n-1} \in \bbR^n$ where $y_i = g(\bX_i)$. Denote by $\bbG_n$ the conditional distribution of $g$ given data $\{\mX,\by\}$ under which $g$ remains a Gaussian process. Let us use the following notations:
\begin{itemize}
    \item $\boldm_{\mX} \in \bbR^{n \times 1}$ is the prior mean vector with row $i$ equal to $m_0(X_i)$.
    \item $\mK_{\mX,\mX} \in \bbR^{n \times n}$ be the prior kernel matrix with the element in row $i$, column $j$ equal to $k_0(\bX_i,\bX_j)$.
    \item $\bk_{\mX}(\bu) \in \bbR^{n \times 1}$ is the kernel vector with row $i$ equal to $k_0(\bX_i,\bu)$ for $\bu \in [0,1]^d$.
\end{itemize}
Then for any $\bu,\bu_1,\bu_2 \in [0,1]^d$ the \emph{posterior mean, covariance, and variance} are   
\begin{align*}
    m_n(\bu) &:= \bbE_{\bbG_n}\left[g(\bu)\right] = \bk_{\mX}^\top (\bu)\mK_{\mX,\mX}^{-1}(\by-\boldm_{\mX})+m_0(\bu), \\
    k_n(\bu_1,\bu_2) &:= \Cov_{\bbG_n}\left[g(\bu_1),g(\bu_2)\right] = k_0(\bu_1,\bu_2)-\bk_{\mX}(\bu_1)^\top \mK_{\mX,\mX}^{-1} \bk_{\mX}(\bu_2), \quad\mathrm{and} \\
    \sigma_n^2(\bu) &:= \Var_{\bbG_n}\left[g(\bu) \right] = k_n(\bu,\bu).
\end{align*}
Notice that the posterior covariance and variance only depend on the observation locations $\mX$ and not the observations $\by$. The top panel of \cref{fig:TP_FP_TN_FN} visualizes a Gaussian process with posterior mean $m_n(\bu)$ and a $95\%$ confidence interval $[m_n(\bu) - 1.96\sigma_n(\bu),m_n(\bu)+1.96\sigma_n(\bu)]$ around every point $\bu \in [0,1]^d$. 

\subsection{Binary Classification with Gaussian Processes} \label{sec:binary_classificaiton_GP}

With the Gaussian process posterior in hand, we now formalize how it induces a pointwise classification of the input space into predicted failure and success regions. This classification leads to natural notions of accuracy and error rate that play a central role in the credible intervals of \cref{sec:estimators_error_bounds} and the adaptive sampling scheme of \cref{sec:adaptive_sampling_algorithm_cost}.

Two sources of uncertainty are present. In the Monte Carlo setting of \cref{sec:Monte_Carlo_Methods} there is uncertainty in the canonical $\bU$ on probability space $([0,1]^d,\calB([0,1]^d),\bbU)$ where $\bbU$ is the Lebesgue measure on $[0,1]^d$ so $\bU \sim \calU[0,1]^d$. In \cref{sec:gp_regression} there is uncertainty in the canonical stochastic Gaussian process $g$ on probability space $(\Omega,\calF,\bbG)$ indexed by elements in $[0,1]^d$ with mean $m: [0,1]^d \to \bbR$ and positive definite covariance kernel $k: [0,1]^d \times [0,1]^d \to \bbR$. We refer to $\bbU$ as the domain probability over the parameter space and $\bbG$ as the range probability over the function space.
Let us define the random \emph{failure and success domains} $F,S: \Omega \to \calB([0,1]^d)$ by
\begin{align*}
    F(g) &:= \{\bu \in [0,1]^d: g(\bu) \geq 0\}, \\
    S(g) &:= \{\bu \in [0,1]^d: g(\bu) < 0\}
\end{align*}
respectively. Then the \emph{domain probability of failure} $P: \Omega \to [0,1]$ is 
\begin{equation}
    P(g) := \bbU(F(g)) = \bbE_\bbU[1_{F(g)}(\bU)].
    \label{eq:P(g)}
\end{equation}
and the \emph{range probability of failure} $p: [0,1]^d \to [0,1]$ is 
\begin{equation}
    p(\bU) := \bbE_\bbG[1_{F(g)}(\bU)] = \Phi\left(\frac{m(\bU)}{\sigma(\bU)}\right)
    \label{eq:p(U)}
\end{equation}
where $\Phi$ is the CDF of a standard normal distribution and $\sigma^2(\bU) := k(\bU,\bU)$ as in \cref{sec:Gaussian_Processes}. The framework presented here applies to more general probabilistic models when the last equality above is replaced appropriately for non-Gaussian processes. 

Let the deterministic \emph{predicted domain failure, success regions} $\widehat{F},\widehat{S} \subseteq [0,1]^d$ be
\begin{align*}
    \widehat{F} &:= \{\bu \in [0,1]^d: p(\bu) \geq 1/2 \} = \{\bu \in [0,1]^d: m(\bu) \geq 0\}, \\
    \widehat{S} &:= \{\bu \in [0,1]^d: p(\bu) < 1/2\} = \{\bu \in [0,1]^d: m(\bu) < 0\}.
\end{align*}
Then we may define the \emph{true positive, true negative, false positive, and false negative} functions $\TP,\FP,\TN,\FN: \Omega \to \calB([0,1]^d)$ by  
\begin{alignat*}{2}
    \TP(g) &= \widehat{F} \cap F(g), \qquad 
    \FP(g) &= \widehat{F} \cap S(g), \\
    \TN(g) &= \widehat{S} \cap S(g), \qquad 
    \FN(g) &= \widehat{S} \cap F(g).
\end{alignat*}
Here positive indicates failure and negative indicates success. The bottom two panels of \cref{fig:TP_FP_TN_FN} visualize these four disjoint regions covering $[0,1]^d$ for some $g_1,g_2 \in \Omega$. 

Notice that $1_{\TP(g)}(\bU) + 1_{\TN(g)}(\bU) + 1_{\FP(g)}(\bU) + 1_{\FN(g)}(\bU) = 1$. Then the \emph{expected accuracy} $\ACC: [0,1]^d \to [0,1]$ is naturally defined as
\begin{align*}
  \ACC(\bU) &:= \bbE_\bbG \left[1_{\TP(g)}(\bU)+1_{\TN(g)}(\bU)\right] \\
  &= 1_{\widehat{F}}(\bU)p(\bU) + 1_{\widehat{S}}(\bU)[1-p(\bU)] \\
  &= \max\{p(\bU),1-p(\bU)\}
\end{align*}
and the \emph{expected error rate} $\ERR: [0,1]^d \to [0,1]$ is defined as 
\begin{equation}
    \ERR(\bU) := 1-\ACC(\bU) = 1_{\widehat{F}}(\bU)[1-p(\bU)] + 1_{\widehat{S}}(\bU)p(\bU) = \min\{p(\bU),1-p(\bU)\}.
    \label{eq:ERR}
\end{equation}
The $\max$ and $\min$ expressions result from the fact that $\bU \in \widehat{F}$ if and only if $p(\bU) \geq 1/2$.

\section{Estimators and Credible Intervals} \label{sec:estimators_error_bounds}

Our main contributions are \cref{thm:ci_P_check} and \cref{thm:ci_P_hat} in this section which derive error bounds for the domain probability of failure quantity of interest
$$P(g) = \bbE_\bbU[1_{F(g)}(\bU)]$$ 
as defined in \eqref{eq:P(g)}. Specifically, we derive credible intervals of the form $[\underline{P},\overline{P}]$ so that 
\begin{equation}
    \bbG\left(P \in [\underline{P},\overline{P}]\right) \geq 1-\alpha.
    \label{eq:ci}
\end{equation}
where $\alpha \in (0,1)$ is some uncertainty threshold. Similar variance bounds were derived in \cite{bect2012sequential} where they were used in a greedy sequential sampling scheme. The bounds derived in \cref{thm:ci_P_hat} will always be tighter than those in \cref{thm:ci_P_check}. 

\begin{theorem}[Credible Interval from Posterior Mean Estimate] \label{thm:ci_P_check}
    Denote the posterior mean estimate by
    \begin{equation}
        \check{P} := \bbE_\bbG[P(g)] = \bbE_\bbG[\bbE_\bbU[1_{F(g)}(\bU)]] = \bbE_\bbU[p(\bU)].
        \label{eq:P_check}
    \end{equation} 
    Then \eqref{eq:ci} holds when 
    \begin{equation}
        \underline{P} = \max\left\{\check{P}-\check{\gamma},0\right\}, \qquad \overline{P} = \min\left\{\check{P}+\check{\gamma},1\right\}, \qquad \check{\gamma} := \frac{2\bbE_\bbU\left[p(\bU)(1-p(\bU))\right]}{\alpha}. 
        \label{eq:p_bounds_check}
    \end{equation}
\end{theorem}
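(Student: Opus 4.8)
The plan is to control the $\bbG$-fluctuation of $P(g)$ about its mean $\check P$ by a \emph{first}-moment (Markov) bound rather than a variance bound; this is precisely what produces the $1/\alpha$ (as opposed to $1/\sqrt{\alpha}$) scaling in $\check\gamma$. First I would rewrite the centered quantity, using \eqref{eq:P(g)} and \eqref{eq:P_check}, as an integral over the cube,
$$P(g) - \check P \;=\; \bbE_\bbU\!\left[1_{F(g)}(\bU) - p(\bU)\right] \;=\; \int_{[0,1]^d} \bigl(1_{F(g)}(\bu) - p(\bu)\bigr)\,\D \bu ,$$
and then apply the triangle inequality for integrals to obtain $\lvert P(g) - \check P\rvert \le \int_{[0,1]^d} \lvert 1_{F(g)}(\bu) - p(\bu)\rvert\,\D \bu$.

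Next I would take $\bbE_\bbG$ of both sides and interchange expectation and integral. For each fixed $\bu$, the quantity $1_{F(g)}(\bu) = 1_{\{g(\bu)\ge 0\}}$ is a Bernoulli$\bigl(p(\bu)\bigr)$ random variable under $\bbG$ by \eqref{eq:p(U)}, so a one-line computation gives $\bbE_\bbG\lvert 1_{F(g)}(\bu) - p(\bu)\rvert = p(\bu)\bigl(1-p(\bu)\bigr) + \bigl(1-p(\bu)\bigr)p(\bu) = 2\,p(\bu)\bigl(1-p(\bu)\bigr)$. Integrating over $[0,1]^d$ (which has unit volume) yields
$$\bbE_\bbG\bigl[\lvert P(g) - \check P\rvert\bigr] \;\le\; \int_{[0,1]^d} 2\,p(\bu)\bigl(1-p(\bu)\bigr)\,\D \bu \;=\; 2\,\bbE_\bbU\bigl[p(\bU)(1-p(\bU))\bigr] \;=\; \alpha\,\check\gamma .$$
Markov's inequality then gives $\bbG\bigl(\lvert P(g) - \check P\rvert \ge \check\gamma\bigr) \le \bbE_\bbG\bigl[\lvert P(g) - \check P\rvert\bigr]/\check\gamma \le \alpha$, i.e.\ $\bbG\bigl(P(g) \in [\check P - \check\gamma,\, \check P + \check\gamma]\bigr) \ge 1-\alpha$. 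Since $P(g) = \bbU(F(g))$ always lies in $[0,1]$, intersecting the interval with $[0,1]$ does not change this event, which yields exactly the bounds $\underline P = \max\{\check P - \check\gamma,0\}$ and $\overline P = \min\{\check P + \check\gamma,1\}$ of \eqref{eq:p_bounds_check}. (If $\check\gamma = 0$ then $p \in \{0,1\}$ a.e.\ $\bbU$, so $P(g) = \check P$ holds $\bbG$-a.s.\ and the claim is trivial.)

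The one step that requires care is the interchange of $\bbE_\bbG$ and $\int\,\D \bu$: one needs $\bu\mapsto 1_{F(g)}(\bu)$ to be Borel measurable for $\bbG$-a.e.\ path (so that $P(g)$ is a genuine random variable) and $(g,\bu)\mapsto 1_{F(g)}(\bu)$ to be jointly measurable (so that Tonelli applies); both hold for the canonical process of \cref{sec:Gaussian_Processes}, and since the integrand is bounded by $1$ there are no integrability concerns. Everything else is elementary. The same scheme, with $P(g)$ re-centered at the alternative (data-dependent) estimator in place of $\check P$, should drive the proof of \cref{thm:ci_P_hat}.
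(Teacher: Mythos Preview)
Your proposal is correct and follows essentially the same approach as the paper: bound $\bbE_\bbG\lvert P-\check P\rvert$ by moving the absolute value inside the $\bbU$-integral, use the Bernoulli identity $\bbE_\bbG\lvert 1_{F(g)}(\bu)-p(\bu)\rvert = 2p(\bu)(1-p(\bu))$ after swapping the order of integration, and conclude via Markov. If anything, you are slightly more careful than the paper in justifying the Tonelli step and handling the degenerate case $\check\gamma=0$.
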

\begin{proof}
    Markov's inequality, Jensen's inequality, and the triangle inequality imply that for $\gamma > 0$ we have 
    \begin{align*}
        \gamma \bbG\left(\left\lvert P -\check{P} \right\rvert \geq \gamma \right) &\leq \bbE_\bbG \left[\left\lvert P -\check{P} \right\rvert\right] \\
        &= \bbE_\bbG\left[\left\lvert \bbE_\bbU [1_{F(g)}(\bU)] - \bbE_\bbU [p(\bU)] \right\rvert \right] \\
        &= \bbE_\bbG\left[\left\lvert \bbE_\bbU \left[1_{F(g)}(\bU) - p(\bU)\right]\right\rvert\right] \\
        &= \bbE_\bbG\left[\left\lvert \bbE_\bbU\left[1_{F(g)}(\bU)\left(1-p(\bU)\right) - 1_{S(g)}(\bU)p(\bU)\right] \right\rvert \right] \\
        &\leq \bbE_\bbG\left[\bbE_\bbU\left[1_{F(g)}(\bU)\left(1-p(\bU)\right) + 1_{S(g)}(\bU)p(\bU)\right] \right] \\
        &= \bbE_\bbU\left[\bbE_\bbG\left[1_{F(g)}(\bU)\right]\left(1-p(\bU)\right) + \bbE_\bbG\left[1_{S(g)}(\bU)\right]p(\bU)\right] \\
        &= 2\bbE_\bbU\left[p(\bU)\left(1-p(\bU)\right)\right].
    \end{align*}
\end{proof}

\begin{theorem}[Credible Interval from Predicted Failure Region Estimate] \label{thm:ci_P_hat}
    Denote the probability of the predicted failure region estimate by 
    \begin{equation}
        \widehat{P} := \bbU(\widehat{F}) = \bbE_\bbU[1_{\widehat{F}}(\bU)]. \label{eq:P_hat}
    \end{equation}
    Then \eqref{eq:ci} holds when 
    \begin{equation}
        \underline{P} = \max\left\{\widehat{P}-\widehat{\gamma},0\right\}, \qquad \overline{P} = \min\left\{\widehat{P}+\widehat{\gamma},1\right\}, \qquad \widehat{\gamma} := \frac{\bbE_\bbU\left[\ERR(\bU)\right]}{\alpha}
        \label{eq:p_bounds_hat}
    \end{equation}
    for $\ERR(\bU) = \min\{p(\bU),1-p(\bU)\}$ as defined in \eqref{eq:ERR}.
\end{theorem}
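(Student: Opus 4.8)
The plan is to follow the same Markov--Jensen--triangle-inequality template used in the proof of \Cref{thm:ci_P_check}, now comparing the random $P(g)$ against the \emph{deterministic} quantity $\hat{P} = \bbE_\bbU[1_{\hat{F}}(\bU)]$ rather than against $\check{P}$. For any $\gamma > 0$, Markov's inequality gives $\gamma\,\bbG(\lvert P - \hat{P}\rvert \geq \gamma) \leq \bbE_\bbG[\lvert P - \hat{P}\rvert]$, so the entire argument reduces to bounding $\bbE_\bbG[\lvert P - \hat{P}\rvert]$ by $\bbE_\bbU[\ERR(\bU)]$. Taking $\gamma = \hat{\gamma}$ then yields $\bbG(\lvert P-\hat{P}\rvert \geq \hat\gamma) \leq \alpha$, and intersecting the resulting confidence interval with $[0,1]$ (harmless since $P \in [0,1]$ surely) produces exactly the stated pair $[\underline{P},\overline{P}]$.

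The key algebraic observation is the pointwise identity $1_{F(g)}(\bU) - 1_{\hat{F}}(\bU) = 1_{\FN(g)}(\bU) - 1_{\FP(g)}(\bU)$, which follows directly from $\FN(g) = \hat{S}\cap F(g)$, $\FP(g) = \hat{F}\cap S(g)$, and $\hat{S} = [0,1]^d\setminus\hat{F}$. Writing $P - \hat{P} = \bbE_\bbU[1_{\FN(g)}(\bU) - 1_{\FP(g)}(\bU)]$, Jensen and the triangle inequality give $\lvert P - \hat{P}\rvert \leq \bbE_\bbU[1_{\FN(g)}(\bU) + 1_{\FP(g)}(\bU)]$; applying $\bbE_\bbG$ and exchanging the order of the two integrations (legitimate by Tonelli since every integrand is nonnegative) reduces the bound to evaluating $\bbE_\bbG[1_{\FN(g)}(\bU)]$ and $\bbE_\bbG[1_{\FP(g)}(\bU)]$ for a fixed $\bU$.

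Those two vertical expectations are immediate precisely because $\hat{F}$ and $\hat{S}$ are deterministic: $1_{\FN(g)}(\bU) = 1_{\hat{S}}(\bU)\,1_{\{g(\bU)\geq 0\}}$, so $\bbE_\bbG[1_{\FN(g)}(\bU)] = 1_{\hat{S}}(\bU)\,p(\bU)$, and likewise $\bbE_\bbG[1_{\FP(g)}(\bU)] = 1_{\hat{F}}(\bU)\,[1-p(\bU)]$, using $p(\bU) = \bbG(g(\bU)\geq 0)$ from \eqref{eq:p(U)}. Summing gives $1_{\hat{S}}(\bU)p(\bU) + 1_{\hat{F}}(\bU)[1-p(\bU)] = \ERR(\bU)$ by \eqref{eq:ERR}, hence $\bbE_\bbG[\lvert P-\hat{P}\rvert] \leq \bbE_\bbU[\ERR(\bU)]$, which closes the chain. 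I do not expect a genuine obstacle here; the only points requiring mild care are the Fubini/Tonelli exchange between the horizontal measure $\bbU$ and the vertical measure $\bbG$, and the final clipping of the interval to $[0,1]$, both of which are routine. It is also worth noting in passing that, since $\min\{p,1-p\} \leq 2p(1-p)$ pointwise, the radius $\hat{\gamma}$ here is never larger than the corresponding $\check{\gamma}$ of \Cref{thm:ci_P_check}, though this comparison is not needed for the proof itself.
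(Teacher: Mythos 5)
Your proposal is correct and follows essentially the same route as the paper's proof: Markov's inequality, the pointwise decomposition $1_{F(g)} - 1_{\hat{F}} = 1_{\FN(g)} - 1_{\FP(g)}$, the triangle inequality, a Fubini/Tonelli swap of $\bbE_\bbG$ and $\bbE_\bbU$, and evaluation of the vertical expectations to recover $\ERR(\bU)$. Your writeup is in fact a bit more careful than the paper's (which contains a typographical slip in the line where the expectations are exchanged), and your explicit handling of the final choice $\gamma = \hat{\gamma}$ and the clipping to $[0,1]$ fills in steps the paper leaves implicit.
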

\begin{proof}
    Markov's inequality, Jensen's inequality, and the triangle inequality imply that for $\gamma > 0$ we have 
    \begin{align*}
        \gamma \bbG\left(\left\lvert P - \widehat{P} \right\rvert \geq \gamma \right) &\leq \bbE_\bbG \left[\left\lvert P - \widehat{P} \right\rvert\right] \\
        &= \bbE_\bbG\left[\left\lvert \bbE_\bbU\left[1_{F(g)}(\bU) - 1_{\widehat{F}}(\bU)\right]\right\rvert\right] \\
        &= \bbE_\bbG\left[\left\lvert \bbE_\bbU\left[1_{F(g)}(\bU)1_{\widehat{S}}(\bU) - 1_{S(g)}(\bU)1_{\widehat{F}}(\bU)\right]\right\rvert\right] \\
        &\leq \bbE_\bbG\left[\bbE_\bbU\left[1_{F(g)}(\bU)1_{\widehat{S}}(\bU) + 1_{S(g)}(\bU)1_{\widehat{F}}(\bU)\right]\right] \\
        &= \bbE_\bbU\left[\bbE_\bbG\left[1_{F(g)}(\bU)1_{\widehat{S}}(\bU) + 1_{S(g)}(\bU)1_{\widehat{F}}(\bU)\right]\right] \\
        &= \bbE_\bbU[\bbE_\bbG[1_\FN(\bU)+1_\FP(\bU)]] \\
        &= \bbE_\bbU[\ERR(\bU)].
    \end{align*}
\end{proof}

Since for any $\bU \in [0,1]^d$ either $p(\bU) \geq 1/2$ or $1-p(\bU) \geq 1/2$, we see that $\min\{p(\bU),1-p(\bU)\} \leq 2p(\bU)\left(1-p(\bU)\right)$. Therefore, the credible interval in \cref{thm:ci_P_hat} is tighter than the credible interval in \cref{thm:ci_P_check}, and going forward we take $[\underline{P},\overline{P}]$ to be defined as in \eqref{eq:p_bounds_hat}. 

\section{Adaptive Algorithm} \label{sec:adaptive_sampling_algorithm_cost}

The backbone of our method sequentially updates a surrogate Gaussian process to refine the probability of failure estimate and shrink the resulting credible interval. The general iterative procedure is as follows 
\begin{description}
    \item[\textbf{Input}] a Gaussian process prior specified by a prior mean function $m_0$ and prior covariance function $k_0$ which determine prior distribution $\bbG_0$.
    \item[\textbf{Input}] a simulation $g$ which we assume is a realization of the Gaussian process. The prior number of samples of $g$ is $n \gets 0$. 
    \item[\textbf{Step 1}] Evaluate $g$ at some batch of nodes $\bX_1,\dots,\bX_b$ and set $n \to n+b$.
    \item[\textbf{Step 2}] Update the Gaussian process distribution to $\bbG_n$ based on all previous evaluations of $g$.
    \item[\textbf{Step 3}] Compute $N$ sample QMC approximates $\widehat{P}^\QMC_n,\widehat{\gamma}^\QMC_{n}$ of $\widehat{P}_n,\widehat{\gamma}_{n}$ as defined in \eqref{eq:P_hat}, \eqref{eq:p_bounds_hat} based on posterior distribution $\bbG_n$. 
    \item[\textbf{Step 4}] If the approximate $1-\alpha$ credible interval 
    $$[\underline{P}^\QMC_{n},\overline{P}^\QMC_{n}] := [\max\{\widehat{P}^\QMC_n-\widehat{\gamma}^\QMC_{n},0\},\min\{\widehat{P}^\QMC_n+\widehat{\gamma}^\QMC_{n},1\}]$$
    is desirably narrow or the budget for sampling $g$ has expired, we are done. Otherwise, return to Step 1 to continue refining the approximate estimate and credible interval. 
\end{description}

The choice of sampling scheme in Step 1 is arbitrary. A number of deterministic one step, i.e., $b=1$, look ahead schemes are proposed in \cite{bae.pf_gp_uncertainty_reduction,bect2012sequential,wagner2022rare,vazquez2009sequential,bichon2008efficient,lv2015new,camera,dalbey2014gaussian,echard2013combined}. While some are theoretically applicable to $b>1$, the optimization required in practice quickly becomes intractable. We wish to allow $b>1$ so parallel implementations of simulation $g$ on HPC systems may be fully utilized. To this end, we propose to sample $b$ IID points from the distribution with unnormalized density $2\ERR_n \leq 1$ as defined in \eqref{eq:ERR} under $\bbG_n$. IID samples may be obtained using the rejection sampling techniques of \cref{sec:rs}. The expected number of tries required to draw $b$ IID points from $\ERR_n$ is $b/(2\bbE_\bbU[\ERR_n(\bU)])=b/(2\alpha \widehat{\gamma}_n)$. This randomized, non-greedy scheme shows strong empirical performance in the next section. 

Despite not being written explicitly, the estimates $\widehat{P}_n^\QMC$ and $\widehat{\gamma}_n^\QMC$ in Step 3 depend on the desired uncertainty threshold $\alpha$, the posterior distribution $\bbG_n$, and the $N$ QMC sampling nodes $\mU_N := \{\bU_i\}_{i=0}^{N-1}$. We choose to leave the low discrepancy nodes $\mU_N$ unchanged across iterations. $N$ is chosen to be large, e.g., $N=2^{20}$, as the QMC estimates are evaluated using only the posterior mean $m_n$ and posterior covariance $k_n$. 

The prior mean $m_0$ and prior kernel $k_0$ often involve hyperparameters $\eta$ which may be optimized to better fit the data. For example, in the next section we set a prior mean of $0$ and use the Mat\'ern kernel
$$k_0(\bu_1,\bu_2) = \frac{1}{\Gamma(\nu)}2^{\nu-1} \left(\frac{\sqrt{2\nu}}{l}\lVert \bu_1-\bu_2 \rVert_2\right)^\nu K_\nu\left(\frac{\sqrt{2\nu}}{l}\lVert \bu_1-\bu_2 \rVert_2\right), \qquad \bu_1,\bu_2 \in [0,1]^d$$
with hyperparameters $\eta = (\nu,l)$ where $K_\nu$ is the modified Bessel function and $\Gamma$ is the gamma function. 

One may choose to re-optimize hyperparameters at each iteration in Step 2. This costs $\calO(n^3)$ as the kernel matrix decomposition must be completely recomputed. Alternatively, the hyperparameters may be optimized after the initial sampling and kept fixed in subsequent iterations. This costs $\calO(b^3+(n-b)^2b)$ since we may reuse the kernel matrix decomposition from the previous $n-b$ samples. Moreover, since the QMC nodes $\mU_N$ are unchanged, we may efficiently update the QMC approximates by using $\{p_{n-b}(\bU_i)\}_{i=0}^{N-1}$ when computing $\{p_n(\bU_i)\}_{i=0}^{N-1}$. 

\begin{remark}
If the sampling nodes in Step $1$ are instead chosen to match the kernel, e.g., from a low discrepancy sequence with a matching isotropic kernel, then the kernel matrix decomposition costs only $\calO(n \log n)$ \cite{rathinavel2019fast,jagadeeswaran2019fast,jagadeeswaran2022fast}. However, such a sampling scheme is not posterior-aware and will typically require a much larger number of expensive simulations than adaptive schemes that trade off exploration and exploitation.
\end{remark}

\section{Numerical Experiments}\label{sec:numerical_experiments}

The following numerical experiments are based on our open source Python implementation in the QMCPy Python package \cite{QMCPy.software}. Benchmark examples are reproducible in \url{https://github.com/QMCSoftware/QMCSoftware/blob/master/demos/talk_paper_demos/ProbFailureSorokinRao/prob_failure_gp_ci.ipynb}.  Since $\ERR_0 \propto 1_{[0,1]^d}$, a uniform density, the initial samples are selected from a randomized low discrepancy sequence rather than drawing IID points using rejection sampling as indicated in the algorithm. 

In the implementation, Gaussian process regression is performed using GPyTorch \cite{gardner2018gpytorch}. GPyTorch enables flexible Gaussian process construction with a variety of available prior mean functions, prior covariance kernels, and optimization techniques. Moreover, GPyTorch seamlessly enables GPU utilization making our algorithm scalable to thousands of samples. 

A number of benchmarks in small and medium dimensions are listed in \cref{table:toy_examples} alongside their true solution and corresponding figure reference. Such examples are often defined by original function $\tilde{g}: \calT \to \bbR$ with respect to non-uniform random variable $\bV$ on $\calT$ and where failure occurs when the simulation exceeds some $\eta \in \bbR$. We may perform a change of variables so $\bV \sim \varphi(\bU)$ where $\bU \sim \calU[0,1]^d$ as desired and the probability of failure of $g(\bU) := \tilde{g}(\varphi(\bU))-\eta$ is equivalent to the probability of failure of $\tilde{g}(\bV)$. For example, if $\bV \sim \calN(\ba,\mSigma)$ then one may choose $\varphi(\bU) = \ba + \mA \Phi^{-1}(\bU)$ where $\mSigma = \mA\mA^\top$ and $\Phi^{-1}$ is the inverse CDF of the standard normal taken elementwise. See \cite{qmc_software} for a more rigorous framework and further examples.

\begin{table}[t]
    \centering
    \begin{tabular}{c c c c}
        problem & dimension & true $P(g)$ & Figure\\
        \hline
        Sine & $1$ & $0.50$ & \Cref{fig:sine} \\
        Multimodal \cite{bichon2008efficient} & $2$ & $0.30$ & \Cref{fig:multimodal} \\
        Four Branch \cite{schobi2017rare} & $2$ & $0.21$ & \Cref{fig:fourbranch} \\
        Ishigami \cite{ishigami1990importance} & $3$ & $0.16$ & \Cref{fig:ishigami} \\
        Hartmann \cite{balandat2020botorch} & $6$ & $0.0074$  & \Cref{fig:hartmann} \\
        \hline
    \end{tabular}
    \caption{Benchmark problems across a variety of dimensions.}
    \label{table:toy_examples}
\end{table}

As a more realistic example, we look at the probability the maximum height of a Tsunami exceeds $3$ meters SSHA (sea surface height anomaly) subject to uniform uncertainty on the origin of the Tsunami. The Tsunami simulation was run using UM-Bridge \cite{Seelinger2023}, an interface deploying containerized models. The Tsunami simulation is detailed  in \cite{seelinger2021high}. 

In all the examples we see the credible interval captures the true mean and the width of the credible interval decreases steadily across iterations. The true error is usually at least an order of magnitude below the error indicated from the credible interval. We caution that these results are highly dependent on the choice of the prior kernel, prior mean and their hyperparameters. 

\begin{figure}[H]
    \centering
    \textbf{Sine Problem}
    \includegraphics[width=\textwidth]{./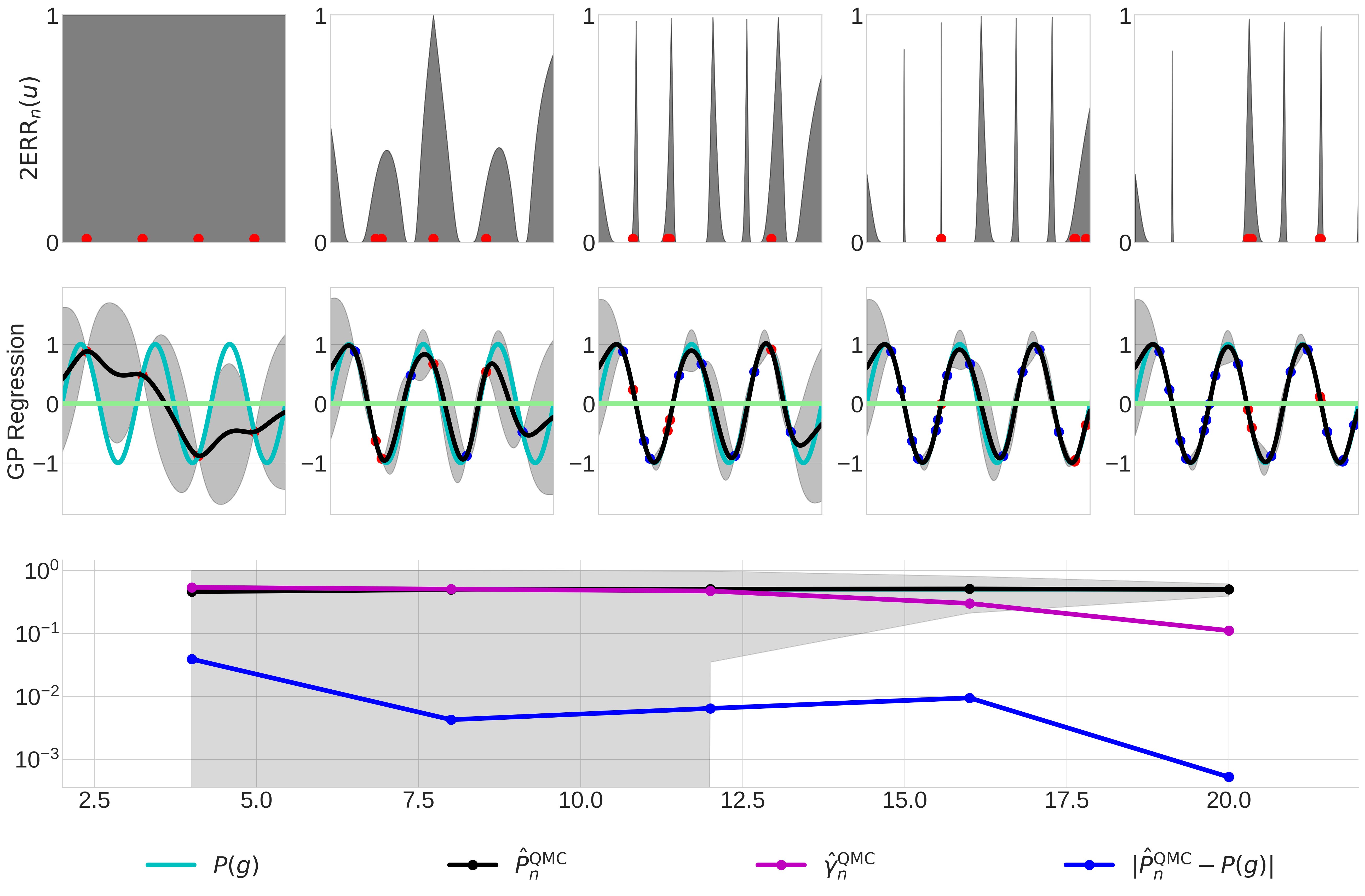}
    \caption{Moving left to right across columns traverses algorithm iterations. In each column, the top row shows the unnormalized density $2\ERR_n$ from which samples are drawn. The second row shows the true function in cyan with the corresponding Gaussian process visualized by its posterior mean in black and a 95\% pointwise confidence interval in gray. Red points are new points introduced in this iteration while blue points are those from previous iterations. The green line is the failure threshold. The bottom plot shows convergence of the algorithm with the resulting approximate credible interval in gray.}
    \label{fig:sine}
\end{figure}

\begin{figure}[H]
    \centering
    \textbf{Multimodal Problem}
    \includegraphics[width=\textwidth]{./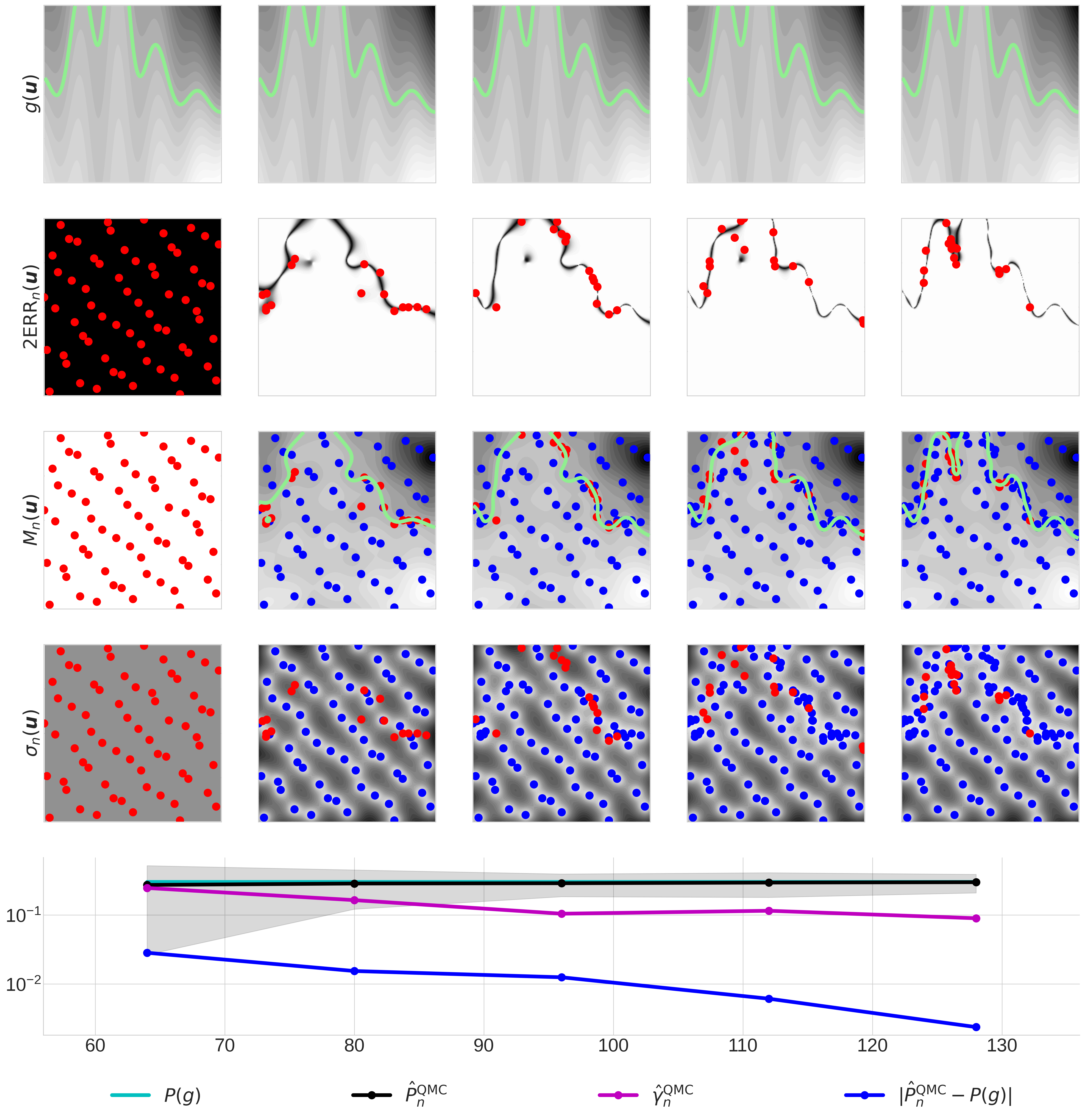}
    \caption{Moving left to right across columns traverses algorithm iterations. In each column, the top row visualizes the true function with true failure boundary $\{\bu \in [0,1]^d: g(\bu) = 0\}$ in green. The second row plots the unnormalized sampling density $2\ERR_n$ while the third and fourth row plot the Gaussian process posterior mean and standard deviation respectively. In the third row we plot the predicted failure boundary based on the posterior mean in green. The final row visualizes convergence as in \cref{fig:sine}.} 
    \label{fig:multimodal}
\end{figure}

\begin{figure}[htbp]
    \centering
    \textbf{Four Branch Problem}
    \includegraphics[width=\textwidth]{./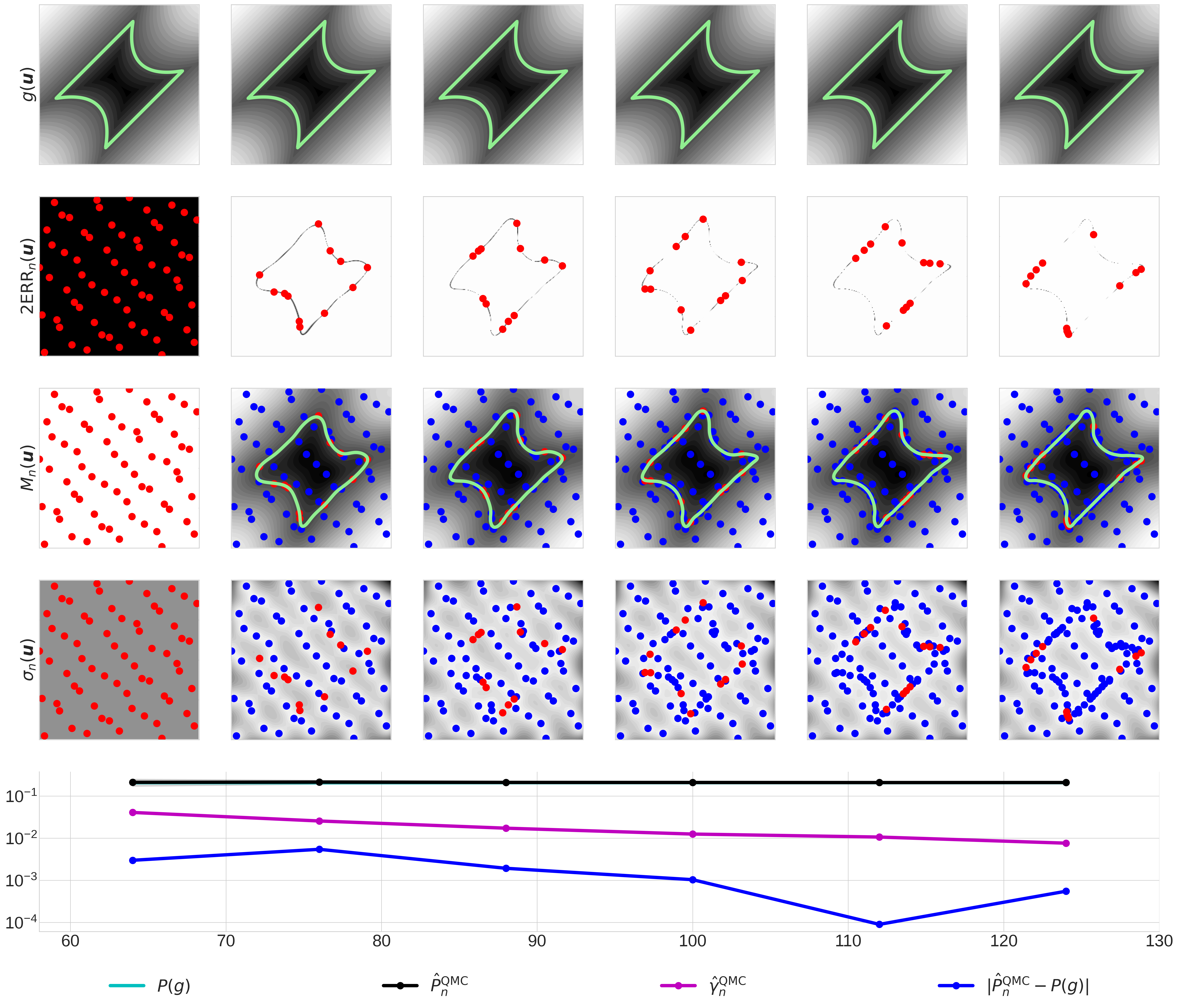}
    \caption{See \cref{fig:multimodal}.}
    \label{fig:fourbranch}
  \end{figure}

\begin{figure}[H]
    \centering
    \textbf{Ishigami Problem}
    \includegraphics[width=.8\textwidth]{./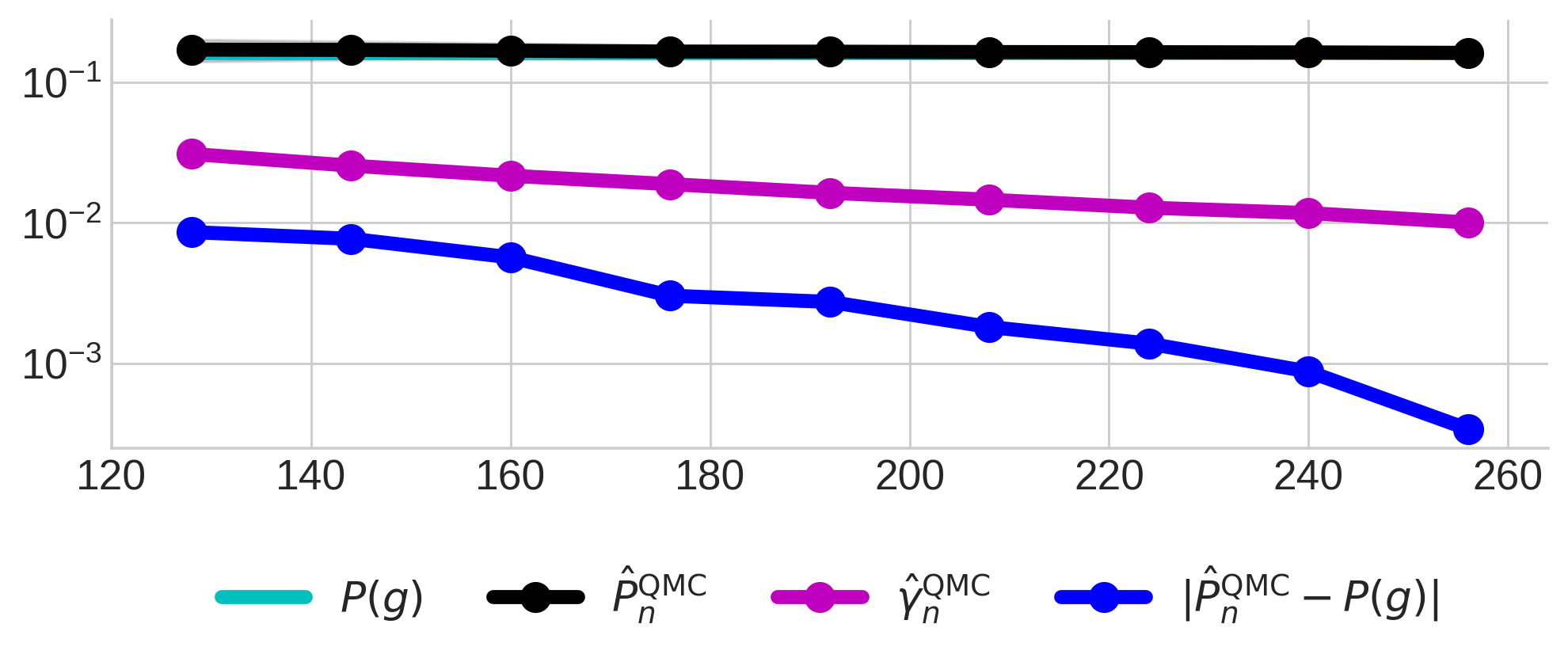}
    \caption{Convergence plots as in \cref{fig:sine}.}
    \label{fig:ishigami}
\end{figure}

\begin{figure}[H]
    \centering
    \textbf{Hartmann Problem}
    \includegraphics[width=.8\textwidth]{./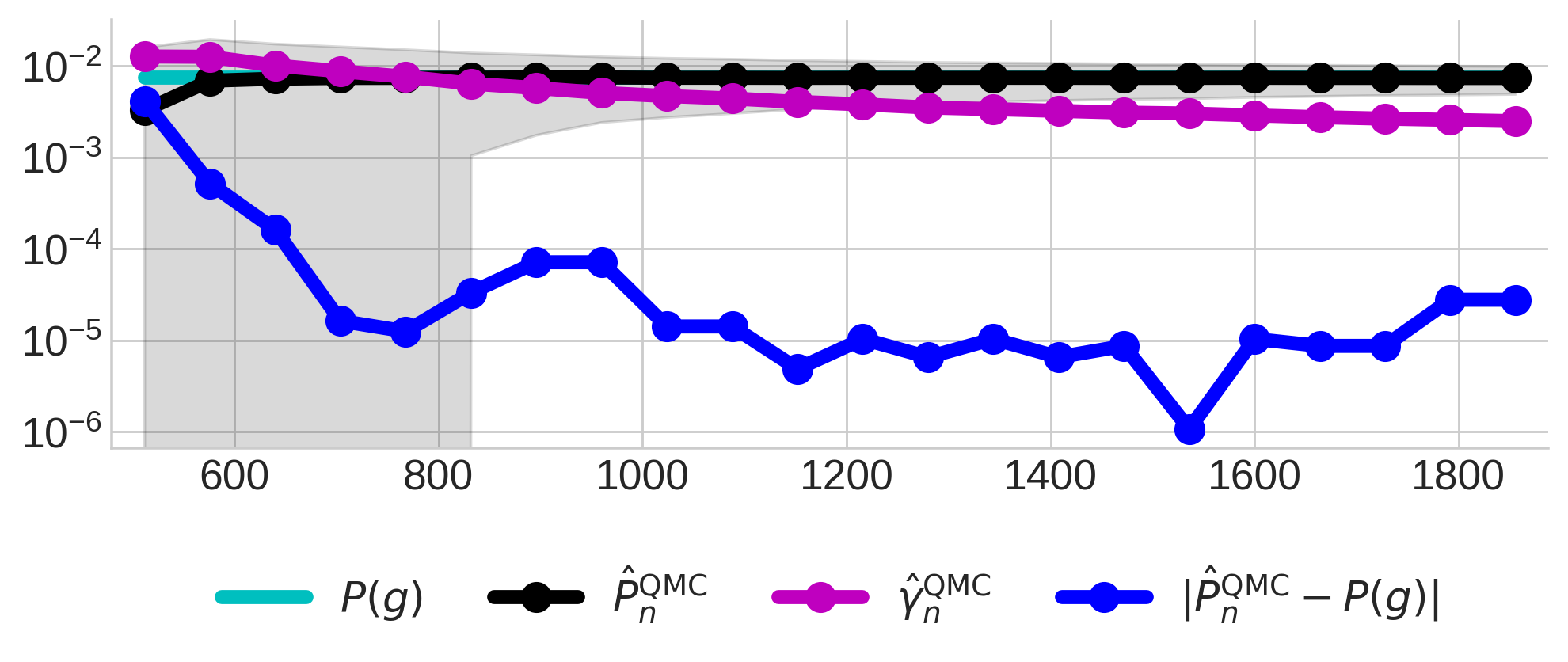}
    \caption{Convergence plots as in \cref{fig:sine}.}
    \label{fig:hartmann}
\end{figure}

\begin{figure}[H]
    \centering
    \textbf{Tsunami Problem}
    \includegraphics[width=\textwidth]{./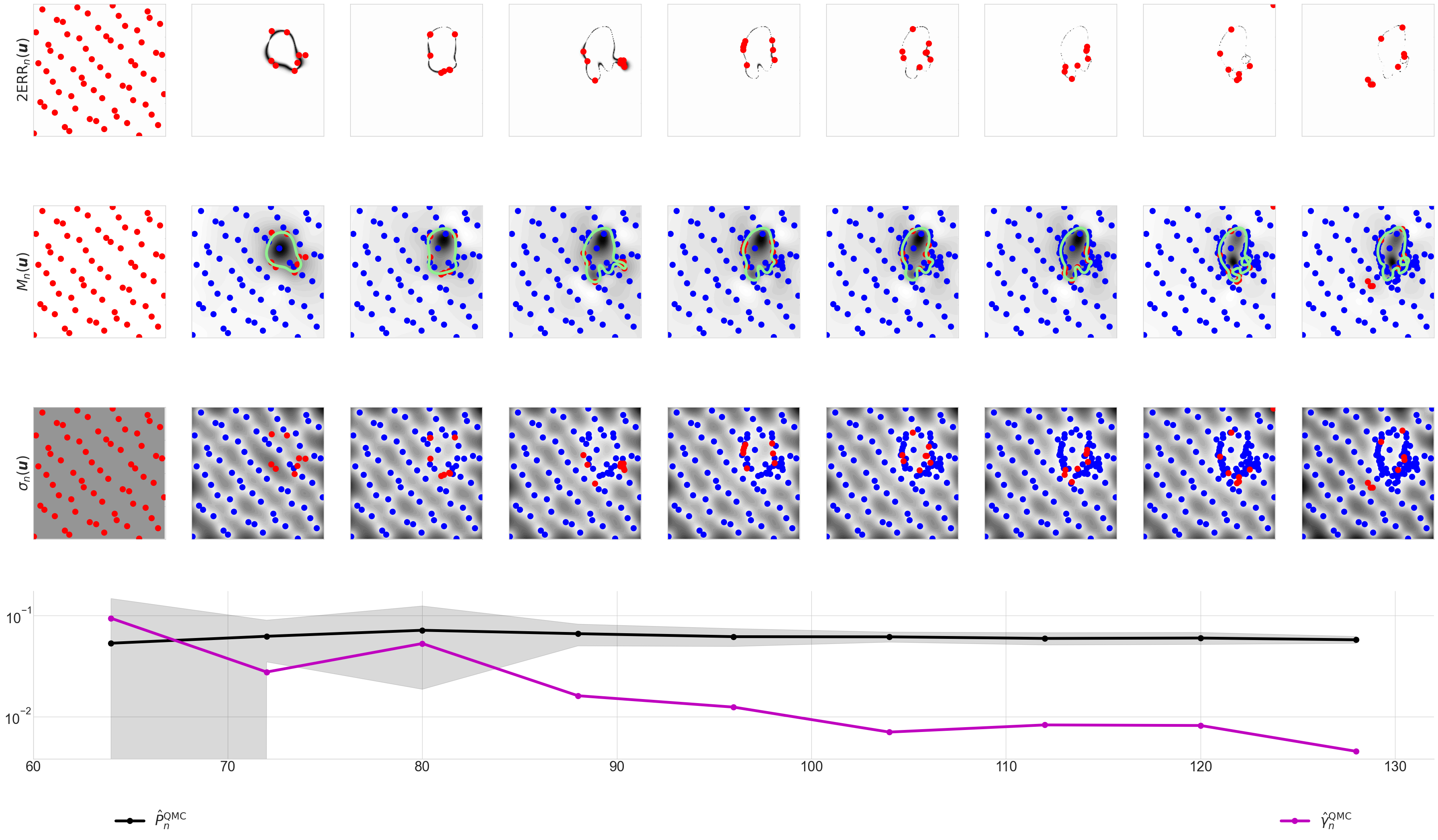}
    \caption{Similar to \cref{fig:multimodal}, except the first row is removed as we can no longer plot the true function. Also, since we no longer know the true mean the convergence plot in the bottom row cannot visualize our true error.}
    \label{fig:tsunami}
\end{figure}

\section{Conclusions and Future Work} \label{sec:conclusions_future_work}

In this article we derive credible intervals on the probability of failure based on a Gaussian process surrogate model. Our credible intervals are compatible with the variety of sampling schemes derived in the literature to intelligently refine the Gaussian process. We propose a novel sampling scheme that suggests arbitrarily sized batches of samples for the subsequent iteration, thus enabling functions capable of parallel evaluation on HPC systems to be queried at multiple locations simultaneously. We provide an open-source, scalable implementation capable of utilizing GPU acceleration for Gaussian process fitting. The algorithm shows strong empirical performance on both benchmark examples in low to medium dimensions and a realistic Tsunami simulation.

We propose a number of directions for future work:
\begin{itemize}
    \item The credible intervals quantify uncertainty in $P$. However, the credible interval bounds $\underline{P}$ and $\overline{P}$ cannot be computed exactly, so we resort to high precision QMC approximation. These QMC approximates come with their own small uncertainty which should be accounted for in more rigorous credible intervals. 
    \item There are many sampling schemes to refine the posterior distribution and thus shrink the width of the credible interval. A valuable contribution would be analytically or empirically comparing the convergence rates of credible interval width across sampling schemes. 
    \item There is a delicate interplay between the time it takes to run a simulation, fit a Gaussian process regression model, and propose samples. For simulations that are expensive to evaluate, it may be worthwhile to spend more time performing hyperparameter optimization and looking for good next sampling locations. For cheap functions, one may wish to match the sampling scheme and kernel in order to rapidly update the Gaussian process surrogate. Significant savings may be had by optimizing the tradeoff between time spent simulating and time spent between simulations.
    \item This article has discussed the simple case where the simulation is assumed to be deterministic. However, many functions are not deterministic but contain uncertainty of their own. Moreover, simulations may have cheaper, lower-fidelity counterparts. Extending the framework to noisy, multi-fidelity, and/or multi-level problems would greatly widen the scope of applicability and potentially yield significant computational savings. 
\end{itemize}

\section*{Acknowledgments}

The authors would like to thank Tomasz R. Bielecki for helpful discussions and guidance. 

\bibliographystyle{siamplain}
\bibliography{main}

\appendix

\section{Optimal Linear Combination of MC Estimates} \label{sec:optimal_linear_comb_ISMC_estimates}

Let $\{\hat{\mu}_t\}_{t \geq 0}$ be dependent, unbiased, individual estimates of $\mu$. For $\{\beta_t\}_{t \geq 0}$ chosen such that $\sum_{t \geq 0} \beta_t = 1$, the combined estimate $\hat{\mu} = \sum_{t \geq 0} \beta_t \hat{\mu}_t$ is still unbiased for $\mu$. The Cauchy-Schwarz inequality implies that  
\begin{align*}
  \Var[\hat{\mu}] &= \Var\left[\sum_{t \geq 0} \beta_t \hat{\mu}_t\right] \\
  &= \sum_{t \geq 0} \Var[\beta_t \hat{\mu}_t] + 2 \sum_{s < t} \Cov[\beta_s\hat{\mu}_s,\beta_t\hat{\mu_t}] \\
  &\leq \sum_{t \geq 0} \beta_t^2 \Var[\hat{\mu}_t] + 2\sum_{s < t}  \left\lvert \Cov[\beta_s\hat{\mu}_s,\beta_t\hat{\mu}_t]\right\rvert \\
  &\leq \sum_{t \geq 0} \beta_t^2 \Var[\hat{\mu}_t] + 2\sum_{s < t} \lvert \beta_s \rvert  \sqrt{\Var[\hat{\mu}_s]} \lvert \beta_t \rvert \sqrt{\Var[\hat{\mu}_t]} \\
  &= \left(\sum_{t \geq 0} \lvert \beta_t \rvert \sqrt{\Var[\hat{\mu}_t]}\right)^2
\end{align*}
Under the constraint that $\sum_{t \geq 0} \beta_t = 1$, the above bound is clearly minimized by setting $\beta_{t} = 1$ when $t = \argmin_t{\Var\left[\hat{\mu}_t\right]}$ and $\beta_t = 0$ otherwise. So the optimal unbiased combined estimate is simply the individual estimate with the lowest variance and thus does not incorporate samples outside those used to compute the lowest variance estimate. 

\section{Efficient Gaussian Process Updates} \label{sec:efficient_gp_updates}

\subsection*{Stage 1: Initializing the Gaussian Process}\label{sec:gp_s1}

We are first interested in initializing a Gaussian process given data $\bD_n = \{\mX,\by\}$ where $\mX \in [0,1]^{n \times d}$ and $\by \in \bbR^n$ so that $y_i = g(\bX_i)$ for $i=1,\dots,n$. The following procedure computes the common quantities required for this initial Gaussian Process to make future predictions, see \cite[Algorithm 2.1]{rasmussen.gp_ml}. Below we assume the prior mean and prior covariance each cost $\calO(d)$ to evaluate. 

\begin{algorithmic}
  \STATE{$\mK_{\mX,\mX} \gets k(\mX,\mX)+s^2\mI$ \COMMENT{$\calO(dn^2)$}}
  \STATE{$\mL \gets \mathrm{Cholesky}(\mK_{\mX,\mX})$ \COMMENT{$\calO(n^3)$}}
  \STATE{$\boldm_{\bX} \gets m(\mX)$ \COMMENT{$\calO(dn)$}}
  \STATE{$\bDelta \gets \by-\boldm_{\mX}$ \COMMENT{$\calO(n)$}}
  \STATE{$\bkappa \gets \mL\backslash \bDelta$ \COMMENT{$\calO(n^2)$}}
  \STATE{$\bbeta \gets \mL^\top \backslash\bkappa$ \COMMENT{$\calO(n^2)$}}
\end{algorithmic}
The total cost of this initialization is 
$$\calO(n^3+dn^2).$$
We may also perform hyperparameter optimization in this procedure with unchanged complexity. The remainder of this section requires fixed hyperparameters $\eta$. 

\subsection*{Stage 2: Predicting from the Initial Gaussian Process}\label{sec:gp_s2}

The following procedure computes at $\mU \in [0,1]^{N \times d}$ the posterior mean $\hat{\boldm}$ and standard deviation $\hat{\bsigma}$ given $\bD_n$, again see \cite[Algorithm 2.1]{rasmussen.gp_ml}.
\begin{algorithmic}
  \STATE{$\boldm_{\mU} \gets m(\mU)$ \COMMENT{$\calO(dN)$}}
  \STATE{$\mK_{\mX,\mU} \gets k(\mX,\mU)$  \COMMENT{$\calO(dnN)$}}
  \STATE{$\hat{\boldm} \gets \boldm_{\mU} + \mK_{\mX,\mU}^\top \beta$ \COMMENT{$\calO(nN)$}}
  \STATE{$\mV \gets \mL \backslash \mK_{\mX,\mU}$ \COMMENT{$\calO(n^2N)$}}
  \STATE{$\bv \gets \Diag(\mV^\top \mV)$ \COMMENT{$\calO(nN)$}}
  \STATE{$\bSigma_{\mU} \gets \sigma(\mU)$ \COMMENT{$\calO(dN)$}}
  \STATE{$\hat{\bsigma} \gets \sqrt{\bSigma_{\mU} - \bv}$ \COMMENT{$\calO(N)$}}
\end{algorithmic}
The above procedure costs 
$$\calO((dn+n^2)N).$$

\subsection*{Stage 3: Efficiently Updating the Gaussian Process}

Suppose we add $\tilde{\mX} \in \bbR^{b \times d}$, $\tilde{\by} \in \bbR^{b}$ to $\bD_n$ to get $\bD_{n + b}$ and want to update the GP with fixed hyperparameters. If we were to use the procedure in Stage 1, potentially including hyperparameter optimization, then the cost of refitting would be
$$\calO((n+b)^3+d(n+b)^2)=\calO(n^3+n^2b+nb^2+b^3+dn^2+dnb+db^2)$$
The following procedure efficiently updates common quantities by exploiting block matrix computations. 
\begin{algorithmic}
  \STATE{$\mL^{-1} \gets \mL \backslash \mI$ \COMMENT{$\calO(n^2)$}}
  \STATE{$\mK_{\tilde{\mX},\tilde{\mX}} \gets k(\tilde{\mX},\tilde{\mX})+s^2\mI$ \COMMENT{$\calO(db^2)$}}
  \STATE{$\mK_{\mX,\tilde{\mX}} \gets k(\mX,\tilde{\mX})$ \COMMENT{$\calO(dnb)$}}
  \STATE{$\tilde{\mL} \gets \mathrm{Cholesky}\left(\mK_{\tilde{\mX},\tilde{\mX}} - \mK_{\mX,\tilde{\mX}}^\top  (\mL^\top  \backslash (\mL \backslash \mK_{\mX,\tilde{\mX}}))\right)$ \COMMENT{$\calO(b^3+nb^2+n^2b)$}}
  \STATE{$\boldm_{\tilde{\mX}} \gets m(\tilde{\mX})$ \COMMENT{$\calO(db)$}}
  \STATE{$\tilde{\bk} \gets \tilde{\mL} \backslash \left(\tilde{\by}-\boldm_{\tilde{\mX}}-\mK_{\mX,\tilde{\mX}}^\top \mL^{-\top} \bkappa \right)$ \COMMENT{$\calO(b^2+n^2+nb)$}}
  \STATE{$\tilde{\bbeta}_2 \gets \tilde{\mL} \backslash \tilde{\bkappa}$ \COMMENT{$\calO(b^2)$}}
  \STATE{$\tilde{\bbeta}_1 \gets \mL \backslash \left(\bkappa - \mL^{-1}\mK_{\mX,\tilde{\mX}} \tilde{\bbeta}_2\right)$ \COMMENT{$\calO(n^2+nb)$}}
\end{algorithmic}
The above efficient procedure costs
$$\calO(db^2+dnb+b^3+nb^2+n^2b)$$
which avoids the cost of re-evaluating $\mK_{\mX,\mX}$, $\mathrm{Cholesky}(\mK_{\mX,\mX})$, and $\boldm_{\mX}$. This procedure is based on the block matrix relations
\begin{align*} 
  \mathrm{Cholesky}\begin{pmatrix} \mK_{\mX,\mX} & \mK_{\mX,\tilde{\mX}} \\ \mK_{\mX,\tilde{\mX}}^\top  & \mK_{\tilde{\mX},\tilde{\mX}} \end{pmatrix} &= \begin{pmatrix} \mL & 0 \\ \mK_{\mX,\tilde{\mX}}^\top \mL^{-\top} & \tilde{\mL} \end{pmatrix}, \\
  \begin{pmatrix} \mL & 0 \\ \mK_{\mX,\tilde{\mX}}^\top \mL^{-\top} & \tilde{\mL} \end{pmatrix} \begin{pmatrix} \bkappa \\ \tilde{\bkappa} \end{pmatrix} &= \begin{pmatrix} \by-\boldm_{\mX} \\ \tilde{\by}-\boldm_{\tilde{\mX}} \end{pmatrix}, \quad\mathrm{and} \\
  \begin{pmatrix} \mL & \mL^{-1}\mK_{\mX,\tilde{\mX}} \\ 0 & \tilde{\mL} \end{pmatrix} \begin{pmatrix} \tilde{\bbeta}_1 \\ \tilde{\bbeta}_2 \end{pmatrix} &= \begin{pmatrix} \bkappa \\ \tilde{\bkappa} \end{pmatrix}.
\end{align*}

\subsection*{Stage 4: Efficiently Updating Predictions}

Now suppose we would like to update the posterior mean and variance predictions at the same $\mU \in [0,1]^{N \times d}$ from stage 2 to $\hat{\tilde{\boldm}}$ and $\hat{\tilde{\bsigma}}$ now given $\bD_{n + b} = \{(\mX,\by)\} \cup \{(\tilde{\mX},\tilde{\by})\}$ from stage 3. Recomputing predictions from scratch using the procedure in stage 2 would cost 
$$\calO((d(n+b)+(n+b)^2)N) = \calO((dn+db+n^2+nb+b^2)N)$$
The following procedure efficiently updates predictions by exploiting block matrix computations.
\begin{algorithmic}
  \STATE{$\mK_{\tilde{\mX},\mU} \gets k(\tilde{\mX},\mU)$ \COMMENT{$\calO(dbN)$}}
  \STATE{$\hat{\tilde{\boldm}} \gets \boldm_{\mU} + \mK_{\mX,\mU}^\top \tilde{\bbeta}_1+\mK_{\tilde{\mX},\mU}^\top \tilde{\bbeta}_2$ \COMMENT{$\calO(nN+bN)$}}
  \STATE{$\tilde{\mV} \gets \tilde{\mL} \backslash (\mK_{\tilde{\mX},\mU} - \mK_{\mX,\tilde{\mX}}^\top \mL^{-\top}\mV)$ \COMMENT{$\calO(b^2N+n^2N+nbN)$}}
  \STATE{$\tilde{\bv} \gets \bv+\Diag(\tilde{\mV}^\top \tilde{\mV})$ \COMMENT{$\calO(bN)$}}
  \STATE{$\hat{\tilde{\bsigma}} \gets \sqrt{\bSigma_{\mU} - \tilde{\bv}}$ \COMMENT{$\calO(N)$}}
\end{algorithmic}
The above efficient procedure costs
$$\calO((db+b^2+n^2+nb)N)$$
which avoids recomputing $\boldm_{\mU}$, $\mK_{\mX,\mU}$, and $\bSigma_{\mU}$. This procedure is based on the block matrix relations
\begin{align*}
  \boldm_{\mU}+\begin{pmatrix} \mK_{\mX,\mU}^\top  & \mK_{\tilde{\mX},\mU}^\top  \end{pmatrix} \begin{pmatrix} \tilde{\bbeta}_1 \\ \tilde{\bbeta}_2 \end{pmatrix} &= \boldm_{\mU}+ \mK_{\mX,\mU}^\top \tilde{\bbeta}_1+\mK_{\tilde{\mX},\mU}^\top \tilde{\bbeta}_2, \\
  \begin{pmatrix} \mL & 0 \\ \mK_{\mX,\tilde{\mX}}^\top \mL^{-\top} & \tilde{\mL} \end{pmatrix} \begin{pmatrix} \mV \\ \tilde{\mV} \end{pmatrix} &= \begin{pmatrix} \mK_{\mX,\mU} \\ \mK_{\tilde{\mX},\mU} \end{pmatrix}, \quad\mathrm{and} \\
  \Diag\left(\begin{pmatrix} \mV^\top  & \tilde{\mV}^\top  \end{pmatrix} \begin{pmatrix} \mV \\ \tilde{\mV}\end{pmatrix}\right) &= \Diag\left(\mV^\top \mV\right) + \Diag\left(\tilde{\mV}^\top \tilde{\mV}\right).
\end{align*}

\end{document}